\newtheorem{theorem}{Theorem}
\newtheorem{definition}{Definition}
\newtheorem{example}{Example}
\newcommand{\set}[1]{{\mathcal{#1}}}
\newcommand{\defined}{\triangleq}
\def\ker{{C}}
\DeclareMathOperator{\inc}{in}
\DeclareMathOperator{\head}{head}
\DeclareMathOperator{\tail}{tail}
\DeclareMathOperator{\nbr}{n}
\newcommand{\support}[1]{{\lambda\left(#1\right)}}
\def\field{{\mathbb{F}}}
\def\x{{X}}
\def\y{{Y}}
\newcommand{\nodes}{\set{N}}
\newcommand{\edges}{\set{L}}
\newcommand{\sources}{\set{K}}
\begin{document}
\title{Decoding Network Codes by Message Passing}

\author{\authorblockN{Daniel Salmond\authorrefmark{1},
Alex Grant\authorrefmark{2}, Terence Chan\authorrefmark{2} and Ian
Grivell\authorrefmark{1} }
\authorblockA{\authorrefmark{1}C3I Division, Defence Science \&
  Technology Organisation, Adelaide, South Australia \\Email:
  \{daniel.salmond, ian.grivell\} @dsto.defence.gov.au}
\authorblockA{\authorrefmark{2}Institute for Telecommunications
  Research, University of South Australia, Adelaide, South Australia\\
Email: \{alex.grant, terence.chan\}@unisa.edu.au} }

\date{\today}
\maketitle

\begin{abstract}
  In this paper, we show how to construct a factor graph from a
  network code.  This provides a systematic framework for decoding
  using message passing algorithms.  The proposed message passing
  decoder exploits knowledge of the underlying communications network
  topology to simplify decoding.  For uniquely decodeable linear
  network codes on networks with error-free links, only the message
  supports (rather than the message values themselves) are required to
  be passed. This proposed simplified support message algorithm is an
  instance of the sum-product algorithm. Our message-passing framework
  provides a basis for the design of network codes and control of
  network topology with a view toward quantifiable complexity
  reduction in the sink terminals.
\end{abstract}

\section{Introduction}

Network coding \cite{AhlCai00} is a generalization of routing where
intermediate nodes forward coded combinations of received packets
(rather than simply switching). This approach yields many advantages,
which by now are well documented in the
literature~\cite{YeuLi06,FraSol08,HoLun08}.  For the single
session multicast problem, it is well known that linear network codes
are optimal and can achieve the fundamental limit characterized by the
min-cut bound~\cite{LiYeu03}. Linear network codes are strongly
motivated by practical considerations, namely the implementation of
encoders and decoders. Encoders compute linear combinations of
received packets, and decoding may be achieved by solving a system of
linear equations (since the sink terminals receive a linear
transformation of the source data). The standard approach for decoding
is Gaussian elimination followed by back-substitution. The resulting
decoding complexity is cubic in the size of the linear system, and is
essentially independent of the underlying topology.

However, the topology of the underlying communications network has a
direct impact on the structure of the linear system that requires
solution. This admits the possibility of faster decoding algorithms
which exploit such structure. As one example, it is well known that
band-diagonal systems may be solved with complexity that scales
quadratically in the bandwidth \cite{GolvLo96}. Similarly, there are
many low-complexity iterative solvers for large sparse linear systems
\cite{Axe94} (although it should be noted that such iterative methods
are usually confined to operating over real or complex fields).  This
paper is motivated by the possibility of faster decoding algorithms,
which exploit knowledge of the network topology.

Let a directed acyclic graph, $G = (\nodes, \edges)$ model a
communications network with nodes $\nodes$ and directed links
$\edges$.  We assume that there are $|\sources|$ sources, generating
source symbols $\y_{s}, s\in\sources$ at nodes $a(s)\in\nodes$. Let
$\y_{l}$ be the (network coded) symbol transmitted along link
$l\in\edges$.

For any link $l\in\edges$, define
\begin{multline*}
  \inc(l) \defined \\ \left\{e\in\edges : \head(e)=\tail(l)\right\} \cup
  \left\{s\in\sources: a(s)=\tail(l)\right\}.
\end{multline*}
To simplify notation, we will use the following conventions: (1) for
any undirected graph $\nbr(v)$ is the set of neighboring nodes of $v$,
(2) for any function $g(x)$, we denote its support by
$\support{g}=\{x: g(x) \neq 0 \}$ and (3) for any function $g(x,y)$
and for each $y$, we denote the set $\support{g(x|y)}=\{x: g(x,y) \neq
0\}$.

We are particularly interested in vector linear network codes over a
finite field $\field$, where $y_s,y_l\in\field^n$ for $s\in\sources$
and $l\in\edges$.  Encoding is according to
\begin{equation*}
  y_{l}= \sum_{e\in\inc(l)} c_{l,e}y_{e},\quad l\in\edges
\end{equation*}
where the $c_{l,e}$ are the local encoding coefficients and it is
assumed that $y_{s},s\in\sources$ are independent and uniformly
distributed over $\field^{n}$.

The following example motivates careful exploitation of network
structure by the decoding algorithm.
\begin{example}
  Consider the network in Figure \ref{fig:computing}with sources
  $y_{1},\ldots,y_{K}$.  The receiver $t$ aims to reconstruct all source
  symbols.
    Clearly, decoding can be done by directly solving
  \begin{equation*}
    \begin{bmatrix}
      y_{L-K+1}\\ y_{L-K+2}\\ \vdots\\ y_{L}
    \end{bmatrix}
  = A
    \begin{bmatrix}
      y_{1}\\ y_{2}\\ \vdots\\ y_{K}
    \end{bmatrix}
  \end{equation*}
  where the $K\times K$ matrix $A$ depends on the choice of local
  encoding coefficients. Direct solution involves inversion of the
  $K\times K$ matrix $A$, which is $O(K^{3})$.  On the other hand, due
  to the network topology, it is easy to show that 
  $A = B_{K-1} \cdots B_{2} B_{1}$ where each matrix $B_{i}$ has at
  most $K+2$ non-zero entries, located on the diagonals and at the
  $(i,i+1)$ and $(i+1,i)$ entries. Inversion of the $B_{i}$ is very
  simple (essentially having the same complexity as inverting a $2
  \times 2 $ matrix).
 Therefore, we can solve the system of linear equations by computing
\begin{align}\label{eq:inverseseq}
\left[
\begin{array}{c}
y_{1} \\ y_{2} \\ \vdots \\ y_{K}
\end{array}
\right]
= B_{1}^{-1} \cdots B_{K-1}^{-1}
\left[
\begin{array}{c}
y_{L-K+1} \\ y_{L-K+2} \\ \vdots \\ y_{L}
\end{array}
\right].
\end{align}
  Thus for the particular network topology of Figure
  \ref{fig:computing}, decoding may actually be achieved in $O(K)$.
\begin{figure}[htbp]
    \begin{center}
      \includegraphics[width=0.9\columnwidth]{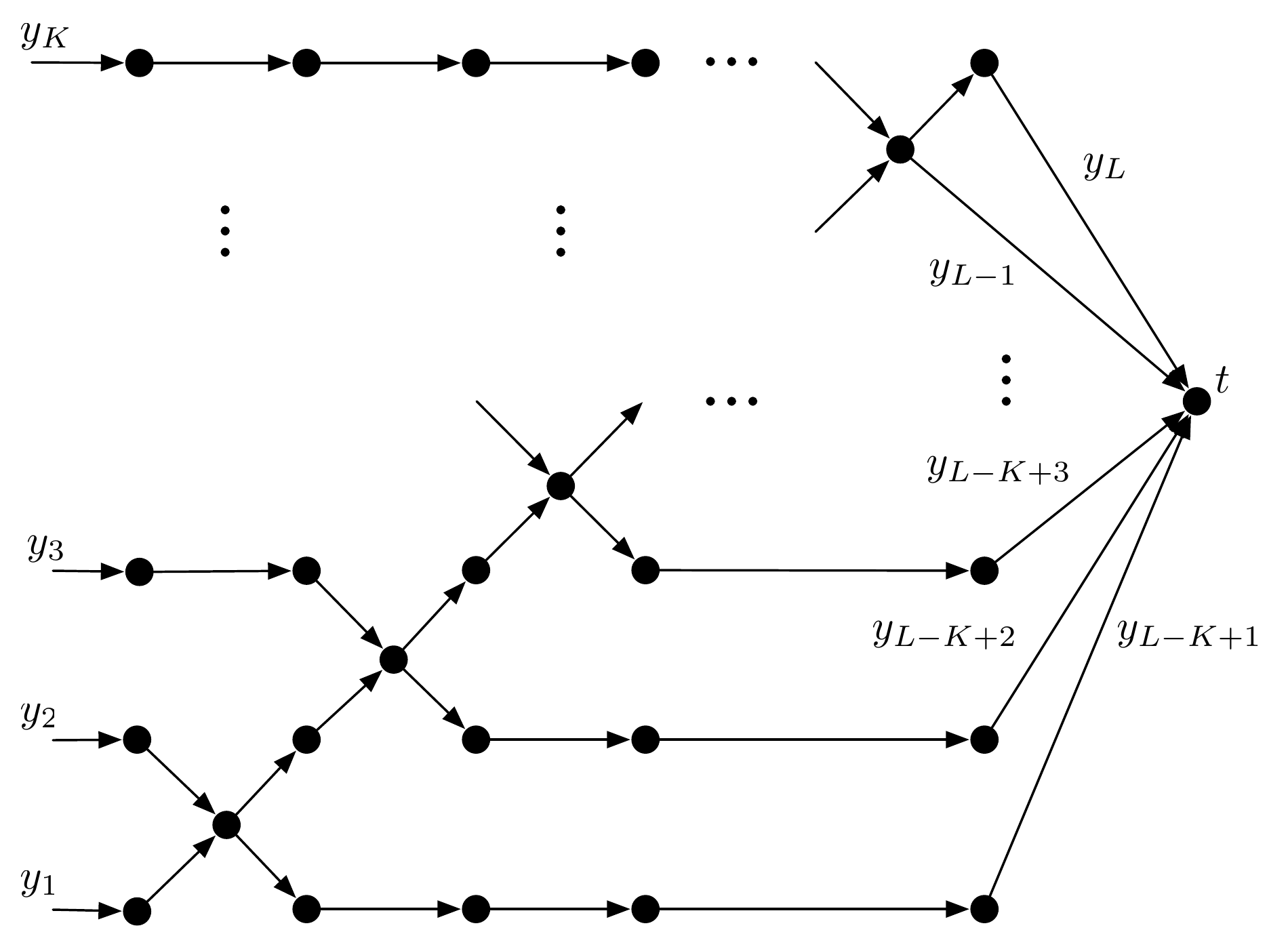}\label{fig:computing}
      \caption{A simple network}
      \label{default}
    \end{center}
  \end{figure}
\end{example}

The obvious fundamental question is how to reduce decoding complexity,
exploiting the topology of an arbitrary network. We propose to use
network topology to construct a factor graph for the network code,
resulting in a simple message-passing decoding algorithm. Despite the
simplicity of this idea, to the best of our knowledge, this paper is
the first systematic development of these methods for linear network
codes. Iterative decoding in a network coding setting has been briefly
considered in~\cite{YanKoe07}, however this work was for a specific
network topology, and did not develop the idea more generally. Iterative decoding has also been considered in~\cite{MoUr07} in which the decoding structure was imposed by a Low Density Parity Check degree distribution instead of the topology of the network.

Section~\ref{sec:factor-graph-message} provides the necessary
background on factor graphs, and introduces a new ``support passing''
algorithm. Roughly speaking, this is the sum-product algorithm for
scenarios where we don't care about the ``probabilities'' themselves,
only whether they are non-zero. In Section~\ref{sec:repr-netw-codes}
we describe how to construct factor graphs from network codes, and
show that the messages utilized by the support-passing algorithm
are cosets of linear subspaces. Section~\ref{sec:junct-tree-algor}
describes how to ensure convergence to the desired solution via
clustering operations.

\section{Factor graphs and message passing}\label{sec:factor-graph-message}

Factor graphs~\cite{KsFrLo01} are a graphical representation of the
factorization of a global function into a product of local
functions. Factor graphs (and related graphical structures such as
Bayesian networks and Tanner graphs) and the attendant message passing
algorithms for marginalization of the global function have found
widespread application~\cite{Fre98}. In particular, these methods are
a key ingredient in modern coding schemes such as low-density parity
check codes~\cite{RiUr08}.

\begin{definition}[Factor graph]
  Consider a global function $g$  that factors into a product
  of local functions
  \begin{equation*}
    g(x_{1},\ldots,x_{n}) = \prod_{j\in\set{J}}\phi_{j}(x_{i}:i\in \alpha_{j})
\end{equation*}
where $\alpha_{j}$ is a subset of $\{ x_{1},\ldots,x_{n}\}$.  A factor
graph for $g$ is a bipartite graph $(\set{V}, \set{F}, \set{E})$ with
\emph{variable nodes} $\set{V} \triangleq \{\x_{1},\ldots, \x_{n}\}$,
\emph{factor nodes} $\set{F}\triangleq \{\phi_{j} : j\in\set{J} \}$
and edges
\begin{equation*}
\set{E} \triangleq \{ (\x_{i},\phi_{j}) \in  \set{V} \times \set{F}:  x_{i} \in \alpha_{j} \}.
\end{equation*}
\end{definition}

Direct brute-force computation of the marginals of a global function
is computationally expensive. However if the global function factors in
a simple manner, computing marginals can be simplified using the
well-known sum-product algorithm~\cite{KsFrLo01}, briefly reviewed
below. Recall that the sum and product operations involved are those
of the relevant semi-ring~\cite{AjMc00}.

\begin{definition}[Sum-Product Algorithm]
  At step $k$, messages $\mu^{k}_{\x \to \phi} (x)$ are sent from variable
  nodes $\x$ to a factor nodes $\phi$, and messages $\nu^{k}_{\phi \to
    \x}(x)$ are sent from factor nodes $\phi$ to a variable nodes
  $\x$. The message updating rules are
  \begin{align}
    \mu^{k+1}_{\x \to \phi} (x) &= \prod_{\psi \in \nbr (\x) \setminus \phi }\nu^{k}_{\psi \to \x}(x) \\
    \nu^{k+1}_{\phi \to \x} (x) & = \sum_{ x^{*}: \x^{*} \in \nbr
      (\phi) \setminus \x } h(x, x^{*}: \x^{*} \in \nbr (\phi)
    \setminus \x)
  \end{align}
  where
  \begin{multline*}
    h(x, x^{*}: \x^{*} \in \nbr (\phi) \setminus \x)  = \\
      \phi (x, x^{*}: \x^{*} \in \nbr (\phi) \setminus
      \x ) \prod_{\x^{*} \in \nbr (\phi) \setminus \x} \mu^{k}_{\x^{*}
        \to \phi}(x^{*}).
  \end{multline*}
\end{definition}
A well-known fundamental result (see e.g. \cite{KsFrLo01}) regarding
the optimality of the sum-product algorithm is restated as follows.
\begin{theorem}[Factor tree]\label{thm:mptree}
  If the factor graph is a tree, then after a finite number of
  updates, all messages will remain unchanged.  Furthermore, for any
  $x_{a}$,
\begin{align}\label{eq:cyclefreemp}
\sum_{x_{i} : i\neq a} g(x_{1},\ldots , x_{n}) = \prod_{\phi \in \nbr(\x_{a}) }\nu^{k}_{\phi \to \x_{a}}(x_{a})
\end{align}
\end{theorem}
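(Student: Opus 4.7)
The plan is the classical inductive argument for the sum-product algorithm on a tree, organised around rooting the factor graph at the query variable $\x_a$. First I would pick $\x_a$ as the root and orient every edge towards it, so that each non-root node $u$ has a unique parent $p(u)$ and a (possibly empty) collection of children. I would then define, for each node $u$ (variable or factor) with parent $p(u)$, the subtree $T_u$ of nodes on the far side of the edge $(u,p(u))$, and let $g_u$ be the product of all factor nodes contained in $T_u$, viewed as a function of the variable nodes in $T_u$. The goal of the induction is to show that once the algorithm has run long enough, the message from $u$ to $p(u)$ equals $\sum g_u$, summed over all variables in $T_u$ except the one on the shared edge.

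The induction runs on the depth of $u$ in the rooted tree, in the direction of leaves-first. The base case covers leaves: a leaf variable node $\x$ has no neighbouring factor other than its parent, so by the update rule its outgoing message is the empty product, i.e.\ the constant $1$; a leaf factor node $\phi$ has no neighbour other than its parent variable, so its outgoing message is just $\phi(x)$. In both cases this matches the claimed subtree marginalisation. For the inductive step at an internal node $u$ of depth $d$, I would assume that for every child $c$ of $u$ the message $c \to u$ has stabilised (after at most $d-1$ further updates beyond the child's own stabilisation time) to the marginalised subtree product $\sum g_c$. The update rules then produce an outgoing message from $u$ whose form is exactly the marginalisation $\sum g_u$, provided one may pull the sums outside the product over children; this is where the distributive law of the underlying semi-ring enters. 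Once this step is in place, an easy diameter argument shows that every message stabilises after a number of updates bounded by the depth of the tree, proving the first assertion of the theorem.

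Applying the induction at the root yields the second assertion: the neighbours of $\x_a$ are exactly the factors $\phi$ adjacent to it, and for each such $\phi$ the eventually-stable message $\nu_{\phi \to \x_a}(x_a)$ equals the marginalisation of the product of all factors in $T_\phi$ over every variable except $x_a$ itself. Since the subtrees $T_\phi$ for $\phi \in \nbr(\x_a)$ partition the remaining factors and (apart from $\x_a$ itself) the remaining variables, a final application of the distributive law gives
\begin{equation*}
\prod_{\phi \in \nbr(\x_a)} \nu^k_{\phi \to \x_a}(x_a) \;=\; \sum_{x_i:\, i\neq a} g(x_1,\ldots,x_n),
\end{equation*}
as required.

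The main obstacle is the inductive step at an internal factor node, where one must interchange a sum over the node's own variables with the product of child messages, each of which is itself a marginalised subtree product over disjoint variable sets. Handling this cleanly requires stating the induction hypothesis in terms of subtree factor-products and variable partitions, and invoking the distributive law of the semi-ring (rather than specifically of $(+,\times)$ on a field) so that the result applies in the generality in which the sum-product algorithm is defined. The remaining bookkeeping, namely bounding the stabilisation time by the tree's diameter and checking leaf base cases, is routine once this step is in place.
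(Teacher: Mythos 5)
The paper does not actually prove this theorem: it is presented as a restatement of a well-known result, with the proof deferred to the cited factor-graph literature (Kschischang--Frey--Loeliger). Your argument is precisely the standard proof from that literature --- root the tree at $\x_{a}$, induct leaves-upward on the claim that each message equals the marginalised product of the factors in the subtree behind it, and use the semi-ring distributive law together with the disjointness of the child subtrees to push sums through products --- and it is correct. The only point worth tightening is the first assertion of the theorem, which concerns \emph{all} messages, not just those directed toward $\x_{a}$: your diameter remark covers this, but to make it explicit one should induct on the height of the subtree behind each directed edge (in either direction), observing that a message on edge $(u,v)$ stabilises one step after all messages into $u$ from its other neighbours have stabilised, with leaf-originated messages stabilising after a single update regardless of initialisation.
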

Theorem \ref{thm:mptree} shows that when the underlying factor graph
is a tree, the sum-product algorithm yields the desired marginals by
\eqref{eq:cyclefreemp}. If the underlying graph is not a tree, then
the sum-product algorithm may not converge, and if it does converge,
it may converge to an incorrect solution.

As we shall see in Section \ref{sec:repr-netw-codes}, there are
meaningful scenarios where we are not interested in the \emph{value} of the
marginals, but rather their \emph{support}. In such
cases, we can simplify the sum-product algorithm.

\begin{theorem}[Support passing algorithm]\label{thm:support}
  Assume all local functions $\phi_{j}(x_{i}:i\in\alpha_{j})$ are
  nonnegative.
  Let
  $\mu^{k}_{\x \to \phi}$ and $\nu^{k}_{\phi \to \x}$ be messages
  being passed  by the sum-product algorithm. Then
\begin{align}\label{eq:supportupdateone}
\support{\mu^{k+1}_{\x \to \phi}} &= \bigcap_{\psi \in \nbr(\x) \setminus \phi} \support{ \nu^{k}_{\psi \to \x}},\\
\label{eq:supportupdatetwo}
\support{ \nu^{k+1}_{\phi \to \x} } &= \bigcup_{x^*\in \lambda(\mu_{X^*\rightarrow\phi}) \atop \forall x^*:X^*\in n(\phi) \backslash X } \support{ \phi(x  | x^{*} \in \nbr
  (\phi)\setminus\x )}
\end{align}
and
\begin{equation*}
  \support{\prod_{\psi \in \nbr(\x) }\nu^{k}_{\psi \to \x}(x)} = \bigcap_{\psi \in \nbr(\x) }  \support{\nu^{k}_{\psi \to \x}(x)}.
\end{equation*}
\end{theorem}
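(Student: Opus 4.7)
The plan is to first establish, by induction on the iteration index $k$, that every message $\mu^k_{\x \to \phi}$ and $\nu^k_{\phi \to \x}$ is a nonnegative-valued function, and then exploit two elementary facts about nonnegative reals: a product vanishes iff some factor vanishes, and a sum vanishes iff every summand vanishes. Without the nonnegativity hypothesis the second equivalence would fail (cancellation), so this invariant is the crux of the argument.

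For the invariant itself, the base case is immediate for the usual all-ones initialization. In the inductive step, $\mu^{k+1}_{\x \to \phi}$ is a product of $\nu^k$-messages and hence nonnegative, while $\nu^{k+1}_{\phi \to \x}$ is a sum of products of $\phi$ (nonnegative by hypothesis) with $\mu^k$-messages (nonnegative by the induction hypothesis), and hence also nonnegative.

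Given this, \eqref{eq:supportupdateone} and the final identity both follow by applying the product fact pointwise in $x$: the defining product $\mu^{k+1}_{\x \to \phi}(x) = \prod_{\psi \in \nbr(\x)\setminus\phi}\nu^k_{\psi \to \x}(x)$ is nonzero at $x$ iff every factor is nonzero there, which is precisely the intersection of the supports. The same argument with the product taken over all of $\nbr(\x)$ instead of $\nbr(\x)\setminus\phi$ yields the last identity. For \eqref{eq:supportupdatetwo} I would unroll the two levels of nesting: the summand $h(x,x^*) = \phi(x,x^*)\prod_{\x^*}\mu^k_{\x^*\to\phi}(x^*)$ is a nonnegative product, hence nonzero iff $\phi(x,x^*) \neq 0$ (equivalently $x \in \support{\phi(\,\cdot\mid x^*)}$) and simultaneously each component satisfies $x^* \in \support{\mu^k_{\x^*\to\phi}}$ for every $\x^* \in \nbr(\phi)\setminus\x$. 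Since all summands are nonnegative, the outer sum is nonzero at $x$ iff at least one summand is nonzero; ranging over all tuples $x^*$ meeting the component-wise support constraints then gives the claimed union.

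The main obstacle is really just careful bookkeeping at the two nested levels, ensuring that nonnegativity is invoked both for the local functions (as assumed) and for the messages (via the induction) at every place it is needed; once nonnegativity is in hand everywhere, the support identities reduce to straightforward set-theoretic consequences of the sum-product update rules.
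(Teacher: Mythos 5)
Your proposal is correct and is essentially the argument the paper intends: the paper's proof is just ``by direct verification,'' and your write-up supplies exactly that verification, correctly identifying that nonnegativity must be propagated to the messages themselves by induction (given a nonnegative initialization) so that no cancellation can occur in the factor-to-variable sum. No gaps.
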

\begin{proof}
By direct verification.
\end{proof}
According to Theorem \ref{thm:support}, if only the supports of the
marginals are required, then one can simplify the sum-product
algorithm by passing only the supports of the messages. Furthermore,
the message update rules are given by \eqref{eq:supportupdateone} and
\eqref{eq:supportupdatetwo}.
\begin{theorem}[Convergence of the support
  passing algorithm]\label{thm:convergence}
  Let $g$ be the global function and assume the support passing
  algorithm is used.  Initialize $\support{\mu^{1}_{\x \to \phi}} =
  \support{\nu^{1}_{\phi \to \x}} = \set{X}$ for all variable node
  $\x$, where $\set{X}$ is the alphabet of variable $\x$. Then
  \begin{enumerate}
  \item $\support{\mu^{k+1}_{\x \to \phi}} \subseteq
    \support{\mu^{k}_{\x \to \phi}}$ and $\support{\nu^{k+1}_{\phi \to
        \x}} \subseteq \support{\nu^{k}_{\phi \to \x}}$ for all
    $k$. Hence, after a finite number of iterations, supports of
    message will remain unchanged;

\item
$\support{\sum_{x^{*} : x^{*}\neq x} g(x_{1},\ldots , x_{n})}$ is a subset of
$ \support{\mu^{k}_{\x \to \phi}}$ and $\support{\nu^{k}_{\phi \to \x}}$.
\end{enumerate}
\end{theorem}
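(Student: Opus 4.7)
My plan is to prove both parts by induction on the iteration index $k$, treating the two statements separately but using the same set-theoretic machinery derived from the update rules \eqref{eq:supportupdateone} and \eqref{eq:supportupdatetwo}.

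For part (1), I would proceed by induction on $k$. The base case is immediate: since $\support{\mu^{1}_{\x \to \phi}} = \support{\nu^{1}_{\phi \to \x}} = \set{X}$, any subsequent support is trivially contained in these. For the inductive step, assume $\support{\mu^{k}_{\x \to \phi}} \subseteq \support{\mu^{k-1}_{\x \to \phi}}$ and $\support{\nu^{k}_{\phi \to \x}} \subseteq \support{\nu^{k-1}_{\phi \to \x}}$ for every edge. Applying \eqref{eq:supportupdateone}, the new $\mu^{k+1}$-support is the intersection of the $\nu^{k}$-supports (over $\psi \in \nbr(\x) \setminus \phi$), which is contained in the corresponding intersection of $\nu^{k-1}$-supports, i.e., $\support{\mu^{k}_{\x \to \phi}}$. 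For \eqref{eq:supportupdatetwo}, the new $\nu^{k+1}$-support is a union indexed by $x^{*} \in \support{\mu^{k}_{\x^{*} \to \phi}}$; since by induction this indexing set shrinks, the union can only shrink, giving $\support{\nu^{k+1}_{\phi \to \x}} \subseteq \support{\nu^{k}_{\phi \to \x}}$. Finite termination then follows because each support is a subset of the finite alphabet $\set{X}$, so a monotonically decreasing chain must stabilize after finitely many steps.

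For part (2), the natural approach is to show the stronger statement: if $(x_{1},\ldots,x_{n})$ is any assignment with $g(x_{1},\ldots,x_{n}) > 0$, then for every edge $(\x_{i},\phi)$ and every $k$, we have $x_{i} \in \support{\mu^{k}_{\x_{i} \to \phi}}$ and $x_{i} \in \support{\nu^{k}_{\phi \to \x_{i}}}$. This clearly implies the desired containment, since $x$ lies in $\support{\sum_{x^{*}\neq x} g}$ iff some such assignment exists. The base case $k=1$ is trivial. For the inductive step, fix such an assignment. By nonnegativity of the local functions, $g > 0$ forces every factor $\phi_{j}$ to be positive at the corresponding restriction. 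Then \eqref{eq:supportupdateone} and the inductive hypothesis (applied to each $\psi \in \nbr(\x_{i}) \setminus \phi$) place $x_{i}$ in the required intersection; and for \eqref{eq:supportupdatetwo}, the tuple $(x^{*} : \x^{*} \in \nbr(\phi) \setminus \x_{i})$ from the assignment lies in the product of the $\mu^{k}$-supports (by induction), while $\phi(x_{i}, x^{*}) > 0$ places $x_{i}$ in the corresponding term of the union.

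The main obstacle is really part (2): part (1) is an almost mechanical monotonicity argument, but for (2) one has to identify the right invariant to carry through the induction. The natural-looking invariant ``the marginal's support is contained in the message supports'' is hard to propagate directly through \eqref{eq:supportupdatetwo}; strengthening it to a per-assignment statement, as above, is what makes the induction close. The nonnegativity assumption of Theorem~\ref{thm:support} is essential here, since it rules out cancellation between assignments and guarantees that each factor of any nonzero term of $g$ is itself nonzero at the corresponding restriction.
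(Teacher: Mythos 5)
Your proof is correct and follows the same route the paper intends: the paper's own proof is just the one-line remark that the theorem is ``a direct consequence of \eqref{eq:supportupdateone} and \eqref{eq:supportupdatetwo},'' and your two inductions (monotonicity of the update maps for part~(1), and the per-assignment invariant for part~(2)) are precisely the details that remark elides. The strengthening to a per-assignment statement in part~(2) is the right way to make the induction close, and your observation about where nonnegativity enters is consistent with the hypotheses carried over from Theorem~\ref{thm:support}.
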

\begin{proof}
A direct consequence of  \eqref{eq:supportupdateone} and \eqref{eq:supportupdatetwo}.
\end{proof}

Theorem \ref{thm:convergence} guarantees convergence of the support
passing algorithm, even for graphs with cycles. If the factor graph is
a tree, the algorithm converges to the support of the
marginals. However, if the factor graph contains cycles, it still
converges, but to a limit that contains the support of the
marginals. Thus the presence of cycles can cause convergence to an
undesired solution (although it will ``contain'' the desired solution).

\section{Factor Graphs for Network Codes}\label{sec:repr-netw-codes}

Let $Y_{s}$ be the symbol generated by the source $s\in\sources$ and
$Y_{l}$ be the symbol transmitted on link $l\in\edges$.  For each link
$l$, the node $\tail(l)$ receives symbols generated from the sources
or transmitted from its neighboring nodes. Considering stochastic
encoding, the symbol transmitted on link $l\in\edges$, namely $y_{l}$,
is randomly selected according to a conditional distribution
$\ker_{l}(y_{l} | y_{a}, a \in \inc(l))$. Deterministic codes are
obtained via degenerate $C_l$. Note that this set-up also permits
modeling of noisy links (incorporating random errors into the $C_l$).

The probability distribution of  $(Y_{s},Y_{l}, s\in\sources,l \in\set{L})$ is
\begin{align}
 \Pr(y_{a}:a\in\set{L}\cup\sources ) \propto \prod_{l\in\set{L}} \ker_{l}(y_{l} | y_{\inc(l)}).
\end{align}

Consider a receiver which observes incoming symbols $Y_{j}$ as $y_{j}^{*}$ for $j\in\set{J}$.
Then it is straightforward to prove that 
\begin{align}
& \Pr(y_{a}:a\in\set{L}\cup\sources | y_{i}^{*}, i\in \set{J}) \nonumber \\
&  \qquad  \propto \left(\prod_{l\in\set{L}} \ker_{l}(y_{l} | y_{\inc(l)}) \right) \left( \prod_{j\in\set{J}} \delta(y_{j} - y_{j}^{*})\right).\label{eq:main}
\end{align}
With error-free links and a uniquely decodeable code (linear or
non-linear), it can easily be shown that decoding is equivalent to
finding the support of the $\{\y_s,s\in\sources\}$ marginal of
\eqref{eq:main}. In more general settings, it may be desired to
compute the marginal posterior probabilities of the source variables.

In the following, we will provide a mechanical approach to represent
the global function \eqref{eq:main} with a factor graph. The method
works for any network codes.  However, as we shall show in the next example,
the obtained factor graph may not be the simplest possible.

\begin{definition}[Network Code Factor Graph]\label{def:ncfg}
  Given the network code defined by stochastic local encoding
  functions $C_l$, define a factor graph with variable nodes $\set{V}
  \triangleq \{ \y_{s}: s\in\sources , \y_{l}:l\in\set{L} \}$ and
  factor nodes $\set{F} \triangleq \{ \phi_{l} : l\in\set{L} \} \cup
  \{ \psi_{j} : j\in\set{J} \}$.  Each factor node $\phi_{l}$ is
  associated with the local function $\ker_{l}(y_{l} | y_{\inc(l)}) $,
  and is connected to variable nodes $\y_{l}$ and $\y_{i}$ where
  $i\in\inc(l)$. Factor node $\psi_{j}$ is associated with the local
  function $\delta(y_{j} - y_{j}^{*})$, and is connected only to the
  variable node $\y_{j}$.
\end{definition}

The topology of the above factor graph depends only on the network
topology but not the particular codes being used. As a result, the
obtained factor graph may not be in its simplest possible form in
general. For example, if some local functions $\phi_{l}$ can be
further factorized into a simpler form, one may be able to further
simplify the factor graph. Furthermore, in the context of network
coding, only the variables generated by sources are of interest. In
that case, it may be possible to prune the factor graph without
affecting the decoding process.

\begin{example}
  Consider the network shown in Figure \ref{fig:egtwo}. The symbols
  generated by the source are $y_{1}$ and $y_{2}$. The receiver $t$
  can observe symbols $y_{4}$ and $y_{5}$.  The factor graph for this
  code (with respect to the given receiver) is depicted in Figure
  \ref{fig:egtwofactor}.

  Depending on the specific choices of encoding functions, this
  factor graph may be simplified. Suppose $y_{3}$ is a function of
  $y_{1}$. Then one can simplify the factor graph by removing the
  link between the factor node $\phi_{3}$ and the variable node
  $\y_{2}$.
  \begin{figure}[htbp]
    \centering
    \subfigure[Network]{
      \includegraphics[scale=0.5]{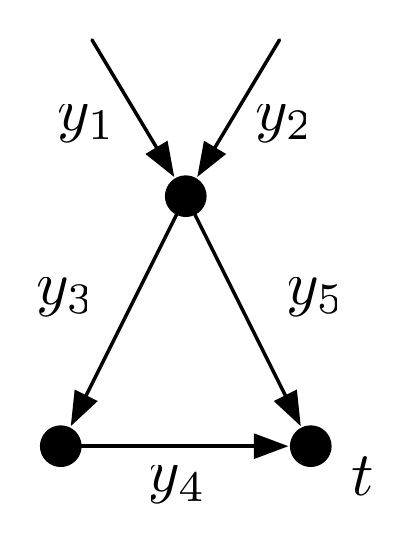}\label{fig:egtwo}
    }
    \subfigure[Factor graph]{
      \includegraphics[scale=0.5]{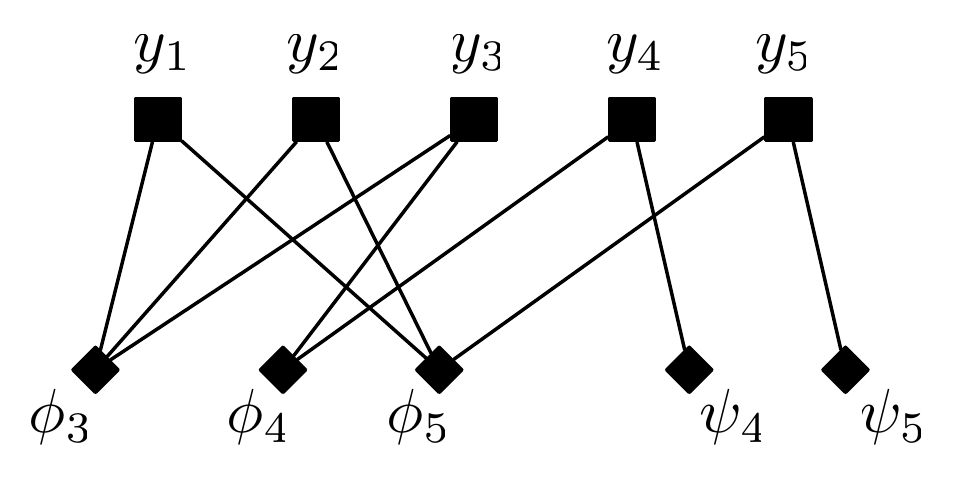}\label{fig:egtwofactor}
    }
    \caption{A relay network}.
  \end{figure}
\end{example}

Let us now consider the special, but important case of deterministic
linear network codes.  All $y_{s}$ for $s\in\sources$ are independent
and uniformly distributed over $\field^{n}$ and for each link
$l\in\set{L}$,
\[
y_{l}= \sum_{e\in\inc(l)} c_{l,e}y_{e}
\]
where the local encoding kernel coefficients $c_{l,e}$ are fixed and known. Hence,\begin{align}
\ker_{l}(y_{l} | y_{\inc(l)}) & =
\begin{cases} 1 & \text{ if }  y_{l}= \sum_{e\in\inc(l)} c_{l,e}y_{e} \\
           0 & \text{ otherwise}
\end{cases} \\
& = \delta \left( y_{l} - \sum_{e\in\inc(l)} c_{l,e}y_{e}\right).
\end{align}

\begin{theorem}[Deterministic linear network codes]\label{thm:dlnc}
  Assume that the support passing algorithm is used.  Suppose that
  $\y_{i}=y_{i}^{*}$ is the actual symbol being transmitted or
  generated. Then
  \begin{enumerate}
  \item $\mu^{k}_{\y \to \phi}$ and $\nu^{k+1}_{\phi \to \y}$ are
    constant over their supports. Therefore, the support passing
    algorithm and the sum-product algorithm are equivalent in this
    case;

  \item $\support{\mu^{k+1}_{\y \to \phi}} $ is a coset of the form
    $y^{*} + W^{k+1}_{\y \to\phi}$ where $W^{k+1}_{\y \to\phi}$ is a
    vector subspace;

  \item Similarly, $\support{\nu^{k+1}_{\phi \to \y}}$ is also a coset
    of the form $y^{*} + \hat{W}^{k+1}_{ \y \to\phi}$ where
    $\hat{W}^{k+1}_{\y \to\phi}$ is also a vector subspace.
\end{enumerate}
Furthermore,
\begin{align}\label{eq:subupdateone}
W^{k+1}_{\y \to\phi} = \bigcap_{\psi: \psi\in\nbr(\y)\backslash \phi} \hat{W}^{k}_{\psi\to\y},
\end{align}
and
\begin{align}\label{eq:subupdatetwo}
\hat{W}^{k+1}_{\phi\to\y} = \left\langle W^{k}_{\y^{*} \to\phi} : \y^{*} \in  \nbr (\phi) \setminus \y  \right\rangle
\end{align}
where
$\left\langle W^{k}_{\y^{*} \to\phi} : \y^{*} \in  \nbr (\phi) \setminus \y  \right\rangle$ is defined as the minimal subspace containing $W^{k}_{\y^{*} \to\phi} $ for all $\y^{*} \in  \nbr (\phi) \setminus \y$.
\end{theorem}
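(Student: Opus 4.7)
The plan is to establish claims (1)--(3) together with the update rules \eqref{eq:subupdateone}--\eqref{eq:subupdatetwo} simultaneously by induction on the iteration index $k$, under the natural assumption that $c_{l,e}\neq 0$ whenever $e\in\inc(l)$. At $k=1$ both message types are the constant $1$ on all of $\field^n$, i.e.\ the trivial coset $y^* + \field^n$ with $W^1=\hat{W}^1=\field^n$, so all three claims hold.

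For the variable-to-factor step, expand $\mu^{k+1}_{\y\to\phi}(y)=\prod_{\psi\in\nbr(\y)\setminus\phi}\nu^k_{\psi\to\y}(y)$. By the induction hypothesis each factor is a nonzero constant $\alpha_\psi$ on the coset $y^*+\hat{W}^k_{\psi\to\y}$ and zero outside it, so the product is the nonzero constant $\prod_\psi\alpha_\psi$ on $y^*+\bigcap_\psi\hat{W}^k_{\psi\to\y}$ and zero elsewhere. This simultaneously confirms claim (2) for step $k+1$ and proves \eqref{eq:subupdateone}.

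The heart of the argument is the factor-to-variable step. Let $\phi_l$ be the factor for $\delta(y_l-\sum_{e\in\inc(l)}c_{l,e}y_e)$ and let $\y$ be the target variable. I reparameterize every neighbor $a$ of $\phi_l$ by $y_a=y_a^*+w_a$; using $y_l^*=\sum_e c_{l,e}y_e^*$ the delta collapses to the homogeneous relation $w_l=\sum_e c_{l,e}w_e$, while the induction hypothesis forces $w_a\in W^k_{a\to\phi_l}$ for each non-target neighbor $a$. The set of $y$-values for which $\nu^{k+1}_{\phi_l\to\y}(y)$ is nonzero consists of those $y^*+w_\y$ for which this relation admits a solution with the non-target $w_a$'s in their prescribed subspaces. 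Treating the cases $\y=\y_l$ and $\y=\y_{e'}$ for some $e'\in\inc(l)$ separately, and using that each $c_{l,e}$ is a nonzero scalar (so $c_{l,e}W^k_{\y_e\to\phi_l}=W^k_{\y_e\to\phi_l}$), this set is exactly the coset $y^*+\langle W^k_{\y^*\to\phi_l}:\y^*\in\nbr(\phi_l)\setminus\y\rangle$, which gives claim (3) together with \eqref{eq:subupdatetwo}. The main obstacle is constancy of $\nu^{k+1}_{\phi_l\to\y}$ on this support: for any fixed $y$ in it, the tuples $(w_a)_{a\neq\y}$ satisfying the delta constraint form a coset of the kernel of a fixed linear map $\prod_{a\neq\y}W^k_{a\to\phi_l}\to\field^n$, so their count is independent of $y$; multiplying by the fixed product of $\mu^k$ constants then yields a single value for $\nu^{k+1}_{\phi_l\to\y}$ throughout the support.

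With both message types shown to be constant on their supports, the sum-product iteration carries only support information up to a global multiplicative scalar at each step, which is exactly the information tracked by the support-passing algorithm. This yields claim (1) and closes the induction.
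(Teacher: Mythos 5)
The paper states Theorem~\ref{thm:dlnc} with no proof at all, so there is no ``paper's argument'' to compare against; your induction supplies the missing argument, and it is essentially correct. The base case, the variable-to-factor step (a product of functions, each a nonzero constant on a coset through the common point $y^{*}$, is a nonzero constant on the intersection coset $y^{*}+\bigcap_{\psi}\hat{W}^{k}_{\psi\to\y}$), and the factor-to-variable step (recentering every neighbor at its true value so the delta constraint becomes homogeneous, then reading off the resulting image/preimage subspace) are all sound, and your kernel-counting device --- the solution tuples for a fixed admissible $y$ form a coset of the kernel of a fixed linear map on $\prod_{a}W^{k}_{a\to\phi_{l}}$, hence have cardinality independent of $y$ --- is exactly what is needed to get constancy over the support and hence claim (1), not merely the coset structure of claims (2)--(3). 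Two points deserve flagging. First, your argument genuinely needs the hypothesis you state, that the $c_{l,e}$ are \emph{nonzero scalars}: otherwise $c_{l,e}W^{k}_{\y_{e}\to\phi_{l}}$ need not equal $W^{k}_{\y_{e}\to\phi_{l}}$, and the right-hand side of \eqref{eq:subupdatetwo} would have to be replaced by the span of the transformed subspaces $c_{l,e}W^{k}_{\y_{e}\to\phi_{l}}$ (with a preimage under $c_{l,e'}$ in the case where the target is an input variable). Since the paper advertises \emph{vector} linear codes, for which the coefficients are in general matrices, the theorem as literally stated is only correct under your restriction or after so modifying \eqref{eq:subupdatetwo}; making the assumption explicit is to your credit. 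Second, you treat only the encoding factors $\phi_{l}$ and not the observation factors $\psi_{j}$ with local function $\delta(y_{j}-y_{j}^{*})$; these are the degenerate case $\hat{W}^{k+1}_{\psi_{j}\to\y_{j}}=\{0\}$ of the same formula (an empty set of non-target neighbors), so nothing breaks, but a complete write-up should say so.
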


Theorem \ref{thm:dlnc}, shows that when deterministic linear codes are
used, the support passing algorithm can be implemented by ``updating''
subspaces according to \eqref{eq:subupdateone} and
\eqref{eq:subupdatetwo}. In the case that the underlying factor graph is acyclic and the network code is  uniquely decodeable (i.e. invertible global linear transform), this subspace update
algorithm is guaranteed to converge to the correct solution.

\begin{example}
  Consider a butterfly network in Figure \ref{fig:butterflya}. The
  sources are $y_{1}$ and $y_{2}$, and the sinks are denoted by open
  circles. A factor graph for the butterfly network is given in Figure
  \ref{fig:butterflya}. Note that, although the butterfly network
  itself has an undirected cycle, the corresponding factor graph,
  depicted in Figure \ref{fig:butterflyb}, is cycle-free. Therefore,
  message passing and support-passing algorithms are exact.

  Suppose sink $t_{1}$ observes $y_{5}$ and $y_{8}$ and aims to
  reconstruct $y_{1}$ and $y_{2}$. Then some of the links in the
  factor graph Figure \ref{fig:butterflyb} can be removed without
  affecting decoding at $t_1$. This results in the simplified factor
  graph Figure \ref{fig:butterflyc}.
\end{example}

\begin{figure}[ht]
\centering
\subfigure[Butterfly network]{
\includegraphics[scale=0.4]{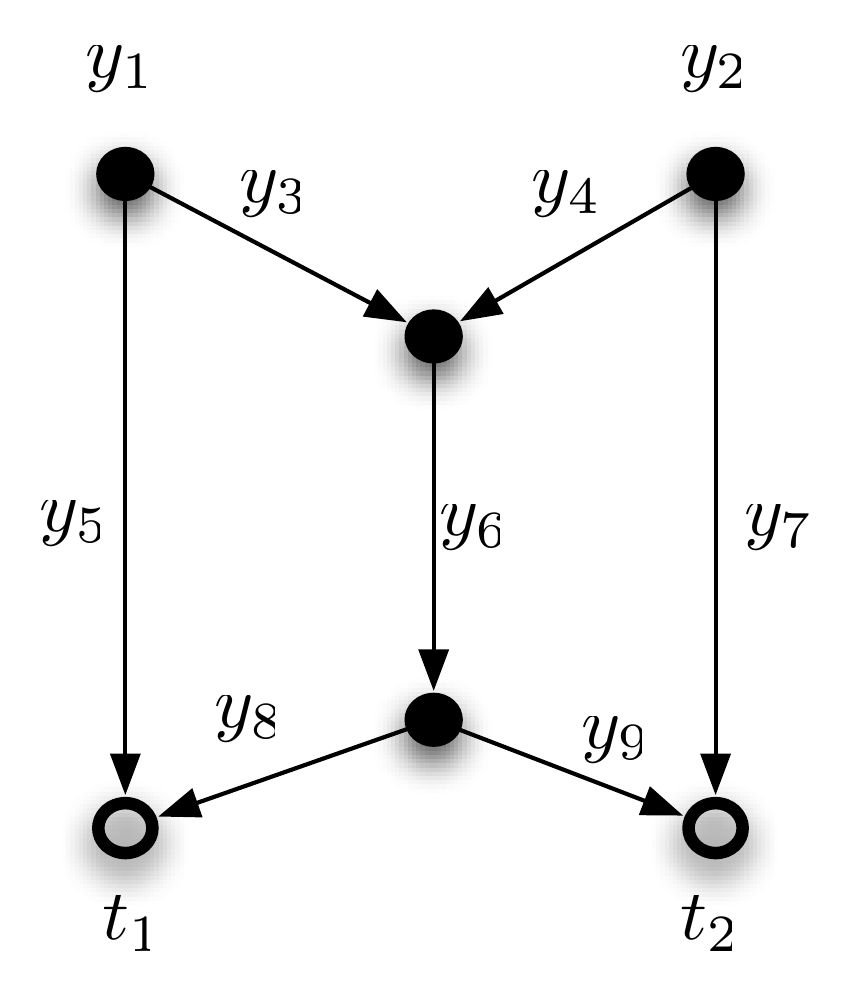}\label{fig:butterflya}
}
\subfigure[Factor graph]{
\includegraphics[scale=0.4]{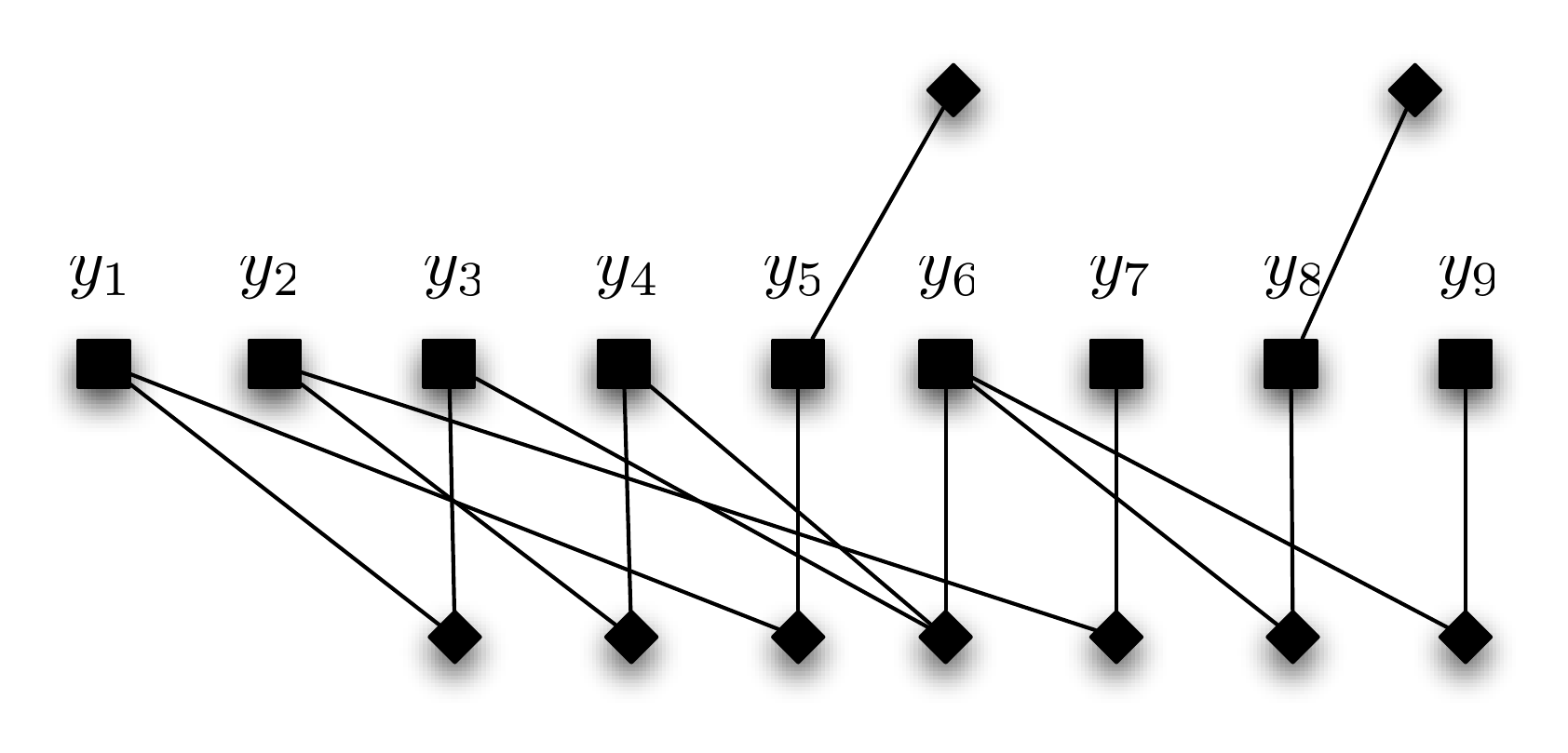}\label{fig:butterflyb}
}
\subfigure[Factor graph (after pruning)]{
\includegraphics[scale=0.4]{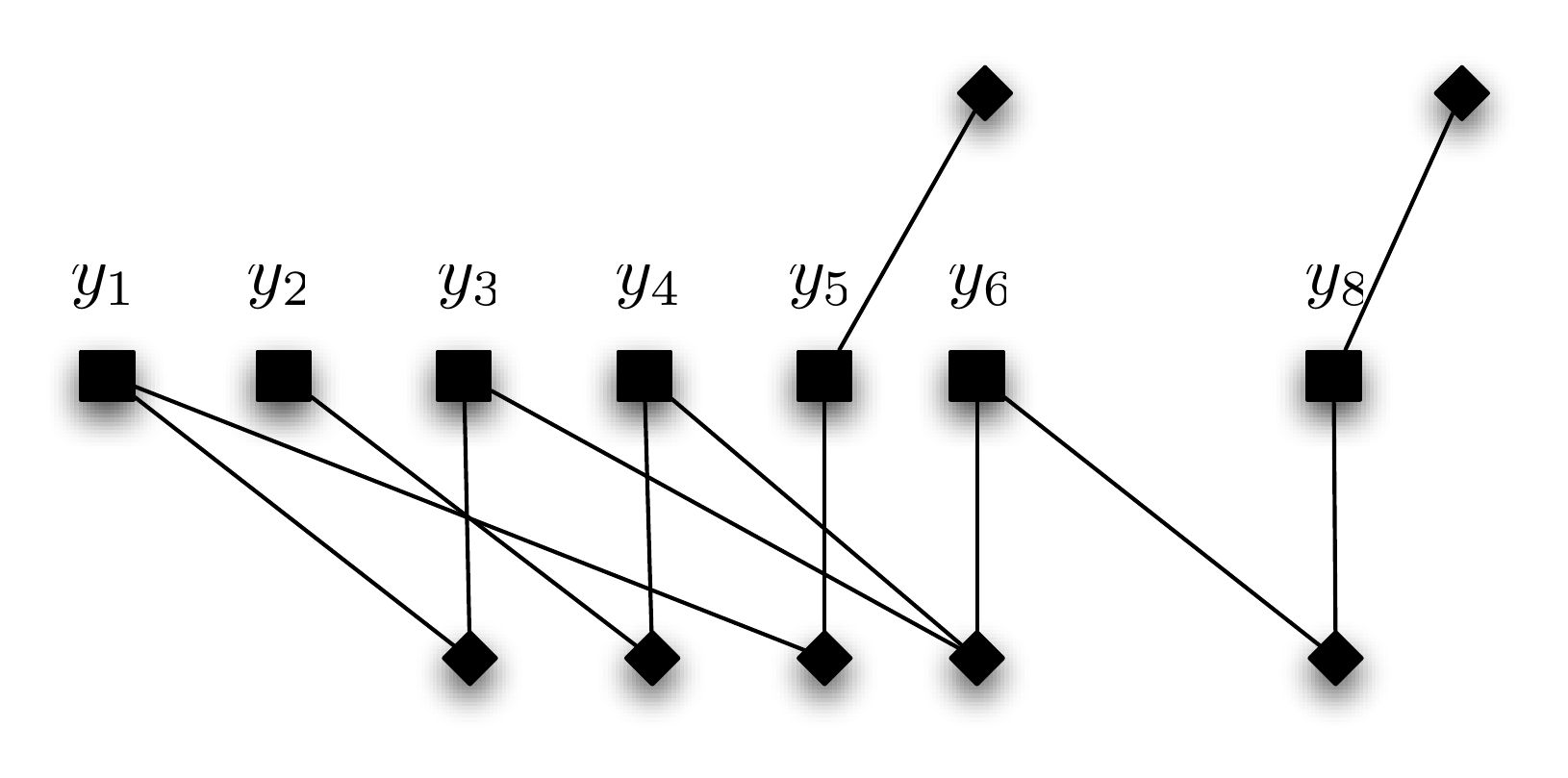}\label{fig:butterflyc}
}
\caption{Butterfly network and its factor graphs}
\end{figure}

\section{Dealing with Cycles}\label{sec:junct-tree-algor}

In the previous section, we proposed a simplified sum-product
algorithm which passes only the supports of messages. It was proved
that the algorithm always converges, and that is equivalent to the
sum-product algorithm in the case of deterministic linear network
codes.  As a result, when the underlying factor graph is a tree, the
proposed support-passing algorithm ensures that the support of
marginals can be found.

Unfortunately, when the factor graph has cycles, the sum-product
algorithm does not always converge. Although support-passing algorithm
converges, it may not converge to the supports of the desired
marginals.

To avoid cycles, one may transform a factor graph with cycles into one
with no cycle.  For example, one common transformation technique is
clustering~\cite{KsFrLo01} as demonstrated in the following
example.

\begin{example}
  Consider the network depicted in Figure \ref{fig:clusterA}, which is
  a simplified version of the network shown in Figure
  \ref{fig:computing}.  From the network, we can construct a factor
  graph in Figure \ref{fig:clusterB} according to Definition
  \ref{def:ncfg}. By clustering function nodes together, we can
  transform the factor graph into a cycle free one as shown in Figure
  \ref{fig:clusterC}. In fact, running the sum-product algorithm over
the cycle free factor graph is essentially equivalent to decoding by
\eqref{eq:inverseseq}.
\end{example}

\begin{figure}[ht]
\centering \subfigure[Original network]{
\includegraphics[scale=0.5]{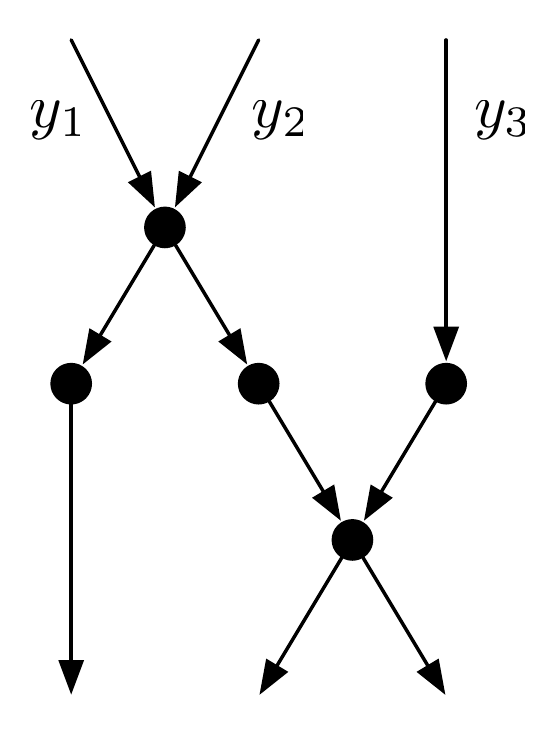}
\label{fig:clusterA}
}

\subfigure[Cyclic factor graph]{
\includegraphics[scale=.5]{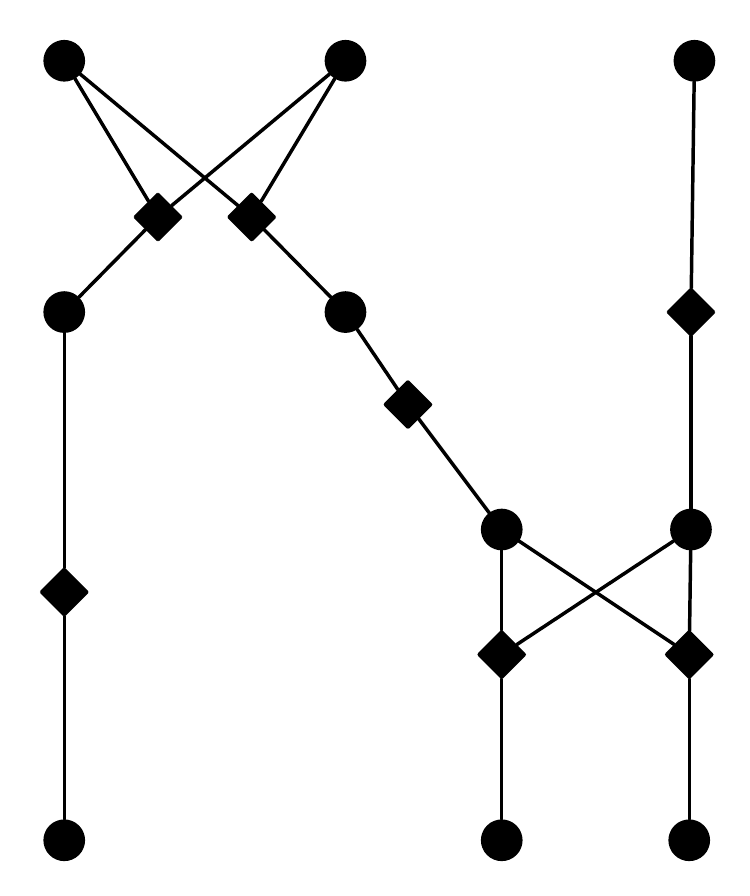}
\label{fig:clusterB}
}
\subfigure[Factor graph after clustering]{
\includegraphics[scale=.5]{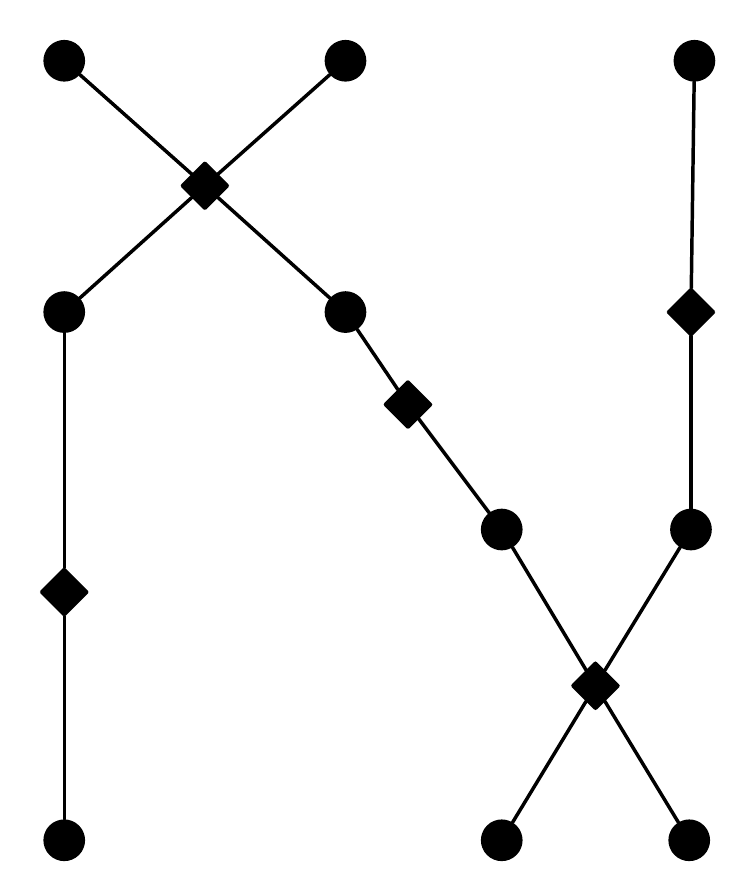}
\label{fig:clusterC} }
\caption{Factor graph transformation}
\end{figure}

Recall that every link is associated with a factor node. As a general
rule, the first step of clustering is to cluster those factor nodes
associated with links originating from the same node.

\section{Conclusion}\label{sec:conclusion}

Decoding of network codes is traditionally achieved by solving a
system of linear equations. Using this approach, the decoding
complexity of the network is essentially independent of the underlying
topology.  This paper shows that if we exploit our knowledge of the
topology, a more efficient decoding algorithm may be obtained.  In
some extreme examples, we showed that the reduction in decoding
complexity can be huge.  This paper prompts a new direction in network
code design: how to choose a network subgraph such that the resulting
network code admits an efficient decoding algorithm.

As the first step towards the goal, we propose the use of message
passing algorithm as a decoding strategy. For a given network code, we
give algorithms to construct a factor graph on which message passing
algorithms (such as the sum-product algorithm) are performed. We
proved that when network links are noiseless, the support passing
algorithm (a simplified version of the sum-product algorithm)
suffices. We showed that the support-passing algorithm is exact when
the underlying factor graph is acyclic and always converges to a limit
which contains the desired solutions, even when the factor graph is
cyclic. Finally, we discussed the use of some graph augmentation
techniques to transform a cyclic factor graph to an acyclic one so
that the support passing algorithm is exact.

\bibliographystyle{ieeetr}
\bibliography{network}

\end{document}